\newtheorem{lemma}{Lemma}
\DeclareMathOperator{\Tr}{Tr}
\DeclareMathOperator{\rank}{rank}
\DeclareMathOperator{\diag}{diag}
\DeclareMathOperator{\vect}{vec}
\newcommand{\bea}{\begin{eqnarray}}
\newcommand{\eea}{\end{eqnarray}}
\newcommand{\bdp}{\begin{displaymath}}
\newcommand{\edp}{\end{displaymath}}
\def\hlinewd#1{%
\noalign{\ifnum0=`}\fi\hrule \@height #1 \futurelet \reserved@a\@xhline}
\newcommand{\RN}[1]{%
  \textup{\uppercase\expandafter{\romannumeral#1}}%
}
\def\BibTeX{{\rm B\kern-.05em{\sc i\kern-.025em b}\kern-.08em
    T\kern-.1667em\lower.7ex\hbox{E}\kern-.125emX}}
\begin{document}

\title{Detection with Uncertainty in Target Direction for Dual Functional Radar and Communication Systems\\
}

\author{Mateen Ashraf, Anna Gaydamaka, Dmitri Moltchanov, John Thompson, Mikko Valkama, Bo Tan\\
\thanks{M. Ashraf, A. Gaydamaka, D. Moltchanov, B. Tan, and M. Valkama are with the Faculty of Information Technology and Communication Sciences, Tampere University, Finland. (e-mails:  
\{mateen.ashraf, anna.gaydamaka, dmitri.moltchanov, bo.tan, mikko.valkama\}@tuni.fi.)}
\thanks{J.Thompson is with the Institute for Digital Communications, School of Engineering, The University of Edinburgh, UK. (email: john.thompson@ed.ac.uk)}
}
\maketitle

\begin{abstract}
Dual functional radar and communication (DFRC) systems are a viable approach to extend the services of future communication systems. Most studies designing DFRC systems assume that the target direction is known. In our paper, we address a critical scenario where this information
is not exactly known. For such a system, a signal-to-clutter-plus-noise ratio (SCNR) maximization problem is formulated. Quality-of-service constraints for communication users (CUs) are also incorporated as constraints on their received signal-to-interference-plus-noise ratios (SINRs). To tackle the nonconvexity, an iterative alternating optimization approach is developed where, at each iteration, the optimization is alternatively performed with respect to transmit and receive beamformers. Specifically, a penalty-based approach is used to obtain an efficient sub-optimal solution for the resulting subproblem with regard to transmit beamformers. Next, a globally optimal solution is obtained for receive beamformers with the help of the Dinkleback approach. The convergence of the proposed algorithm is also proved by proving the nondecreasing nature of the objective function with iterations. The numerical results illustrate the effectiveness of the proposed approach. Specifically, it is observed that the proposed algorithm converges within almost $3$ iterations, and the SCNR performance is almost unchanged with the number of possible target directions.
\end{abstract}

\begin{IEEEkeywords}
Detection probability, uncertainty in target direction, dual-function radar and communication, integrated sensing and communication, penalty resource allocation.
\end{IEEEkeywords}

\section{Introduction}

\IEEEPARstart{I}{n} recent years, integrated sensing and communication systems (ISAC) have received much attention. These investigations are mainly driven by the scarcity of available frequency resources and the ever-increasing adjunct services that are becoming an integral part of the communication systems. Specifically, services such as connected autonomous systems, autonomous driving, vehicle-to-everything (V2X) communications, human activity sensing, and unmanned aerial vehicle (UAV) networks are just a few examples that require the sensing functionality built into communication systems for their proper functioning. Given the impetus of these advances, it is inevitable to envision a future 5G/6G communication system without the built-in sensing functionality \cite{IMT}. 

A very common approach to design an ISAC system is through the dual function radar communication (DFRC) systems, which combine both radar and communication systems through shared use of frequency resources and hardware circuitry. Owing to their spectrum and infrastructure efficiency, DFRC systems have been the focus of several research studies that aim at characterizing and/or optimizing the performance of DFRC systems. A major difference observed when sensing functionality is integrated with communication systems is that performance metrics also differ from those of conventional communication systems. The most commonly used performance metrics for sensing localization are based on the Cr\'amer-Rao bound (CRB) of the distance, angle or speed parameters. Specifically, for localization purposes, the objective is to devise transmission schemes such that the performance is as close to the CRB as possible. On the other hand, for detection purposes, the goal is to allocate transmission resources so that the detection probability is maximized while maintaining the performance of the communication system above a given threshold. 

Regarding localization performance, recent works \cite{sxu, yrong, fliu,ivali, lx} provided CRB minimization schemes under different system setups. In \cite{sxu}, a hybrid approach with known/unknown placements of multiple sensors is presented to estimate the location of multiple targets. The design of various detectors based on the minimization of CRB is presented in \cite{yrong}. The works in \cite{sxu}, \cite{yrong} do not consider the joint operation of sensing and communication. To fill this void, \cite{fliu} provides a CRB minimization scheme. Specifically, CRB is used as a performance metric for target direction estimation, and then a CRB minimization beamforming design is proposed, which also guarantees a predefined level of signal-to-interference-plus-noise ratio (SINR) for each communication user (CU). To account for the energy efficiency, an $l_0$ norm-based antenna selection strategy was proposed in \cite{ivali} for minimizing the CRB. On the other hand, an antenna sharing scheme for sensing and communication/computing purposes was proposed in \cite{lx}. The beamforming design was then provided to minimize the mean squared error for the computing functions.

It is well known that the detection probability is directly proportional to the signal-to-clutter plus noise ratio (SCNR) \cite{dm}. To this end, several interesting approaches have been developed to maximize SCNR in DFRC systems. An SCNR maximization scheme was proposed in \cite{co} for a single input single output (SISO) orthogonal frequency division multiplexing (OFDM) system with a single CU and a target. However, in this work, it was assumed that the interference caused by clutter is independent of the transmitted/reflected signal. Specifically, clutter interference was assumed to follow a Gaussian distribution. This assumption is generally not true since the clutter also reflects the transmitted signal in a similar fashion as the desired target, and hence should be treated accordingly when designing the transmission scheme. To avoid this issue, a multi-user DFRC system was considered in \cite{lc}. In particular, beamforming designs were proposed with the possibility of dedicated and non-dedicated probing
radar signals. It was concluded that when the CUs have knowledge of the radar signal, it is beneficial to use the dedicated probing signal. On the other hand, it was shown in \cite{Ashraf1} that when CUs do not possess the knowledge of the dedicated radar signal, it is not necessary to use a dedicated radar probing signal. Although these systems treated interesting theoretical system models, an important aspect of the target arrival direction was not explored in these works. In these works, it was assumed that the arrival direction of the target is exactly known apriori, and then transmit/receive beamformers were designed based on that particular direction. However, in real-world scenarios, exact knowledge of the target direction may not be available. Only a set of possible arrivals directions may be known in advance. Therefore, it is important to investigate the DFRC system under uncertainty in the target arrival direction.

\subsection{Related Work}

In the context of multi-target sensing, several interesting approaches have been developed recently. Specifically, for multi-antenna DFRC systems, the majority of research work is related to designing beamformers that can satisfy the sensing and communication requirements of modern ISAC systems. To achieve this goal, several interesting optimization problems are solved in the literature. A trade-off between the sidelobe and mainlobe power distribution within the DFRC system was explored in \cite{Lchen1}, and a Pareto boundary analysis was carried out to identify the corresponding operational regions. A beam pattern matching approach was discussed in \cite{Fan1}. Specifically, the goal of this work was to minimize the discrepancy between the transmitted beam pattern and the desired radar beam pattern while satisfying the SINR requirements of the CUs. In \cite{elder} it was identified that the degrees of freedom (DoF) for the approach proposed in \cite{Fan1} is limited by the number of CUs and can be effectively smaller than the number of transmit antennas. Therefore, a new beamforming technique was devised in \cite{elder} that guarantees a higher DoF compared to that achieved in \cite{Fan1}. 

While the schemes proposed in \cite{Fan1, elder} aim to minimize the dissimilarity between the achieved beam pattern and the ideal radar beam pattern while meeting the SINR constraints of the CUs, a more generic objective function for ISAC performance was formulated in \cite{Fan2}. Specifically, a weighted sum of multi-user communication interference (MUI) and the radar beampattern dissimilarity metric were minimized. This work was then extended to MIMO OFDM systems in \cite{Fan3} by incorporating the peak-to-average power ratio (PAPR) constraint in the problem formulation. 

The beampattern matching schemes discussed above focus only on the transmit beamformers. In these schemes, all antennas at the DFRC transmission stations are dedicated to transmission. The idea then is to illuminate certain angular regions with the help of an attained beampattern. However, another possible implementation of the DFRC assumes the division of total antennas into transmit and receive antennas. The function of the receive antennas is to catch the echoes reflected by the target. Therefore, the design of the receive beamformer is also a crucial aspect and should be taken into account. In this context, \cite{Palomar} considered a SCNR maximization problem with PAR constraints. Specifically, the majorization-minimization technique was applied to obtain the transmit beamformer, while closed-form expressions were obtained for the receive beamformers. However, the system model considered in \cite{Palomar} did not consider the QoS constraints for CUs. To fill this gap, a joint optimization with respect to transmit and receive beamformers was proposed in \cite{Ashraf2}. In particular, an iterative alternating optimization-based algorithm was proposed to solve the SCNR maximization problem. However, the convergence guarantee was not proved. On the other hand, an iterative algorithm with convergence guarantee was proposed in \cite{Cwen}. In this work, the author devised an alternating optimization scheme, and for fixed values of CUs transmit beamformers, multiple receive beamformers were designed according to the target directions. Therefore, the number of receive beamformers, according to this work, is equal to the number of radar targets. Another important observation regarding \cite{Cwen} is that the information symbols of the CUs are parameters of the optimization problem. This means that the optimization problem needs to be solved for each new set of transmission symbols. This drastically increases the computational complexity of the proposed solution in \cite{Cwen}, as the frequency of change in information symbols is directly proportional to the data rate.

\subsection{Our Contributions}

This paper considers an ISAC system with multiple CUs and a target with ambiguity in arrival direction. The ultimate goal is to maximize the detection probability of the radar whilst satisfying the minimum data rate requirements of the CUs for all the possible target directions. In this paper, we propose an iterative algorithm based on alternating optimization for SCNR maximization in DFRC systems with uncertainty in the target arrival direction. Unlike \cite{Cwen}, this work proposes using a single receive beamformer at the DFRC station and the parameters of the optimization problem include the channel state information of the CUs and the target/clutter directions. The implication of this is that the receiving side circuitry at the DFRC station is less complex compared to the scheme proposed in \cite{Cwen}. To solve the SCNR maximization problem, we make use of the penalty-based approach and the successive convex approximation (SCA) technique in each iteration to obtain the optimal transmit beamformers for CUs. Next, by fixing the transmit beamformer, it is observed that the resulting optimization problem is a special case of the generalized fractional program for which the global optimal solution can be achieved. These facts and the boundedness of the objective value are then used to show that the objective value is nondecreasing in each iteration, and hence the convergence is guaranteed. 

The main contributions of this paper are listed below.
\begin{itemize}
    \item An alternating optimization-based iterative algorithm is proposed for solving the SCNR maximization problem in DFRC systems where the target arrival direction is ambiguous. Specifically, the proposed algorithm alternately maximizes the SCNR with respect to the transmit beamformers of the CUs and the receive beamformer of the receiver. In contrast to the previous work, the proposed method uses a single receive beamformer to combine the echoes reflected from the target. This helps to reduce the complexity of the receive circuitry in the DFRC receive section.
    \item The convergence of the proposed iterative algorithm is also proved. In particular, the surrogate function properties are used to establish that the objective function is nondecreasing with respect to the optimization of transmit beamformers and the guarantee of global optimal solution with respect to receive beamformer is used to establish that the objective function is non-decreasing as the iterations progress.  
    \item Numerical results are included to illustrate the effectiveness of the proposed algorithm and validate the theoretical claim about convergence. Numerical results indicate that the SCNR performance remains stable with an increase in uncertainty in the target arrival direction. Moreover, the SCNR performance converges after $3$ iterations.
\end{itemize}

\textit{Notations:} The important notation used in this paper is given below. Sets of complex, real numbers are represented by $\mathbb{C}, \mathbb{R}$, respectively. Vectors and matrices are represented by bold, small, and capital letters, respectively. $\bold{A}^H, \bold{A}^{-1}, \Tr(\bold{A}), \rank(\bold{A})$ are the hermitian transpose, inverse, trace and rank of the complex matrix $\bold{A}$, respectively. $\bold{A}\succeq \bold{0}$ means that $\bold{A}$ is positive semidefinite. The Kronekcer product is represented by $\otimes$, and the operator $\diag(\bold{x})$ produces a square matrix whose diagonal entries are given by the vector $\bold{x}$. The vector $\bold{e}_i$ denotes a vector with zero elements except the $i$-th element, which is $1$. The identity matrix is represented by $\bold{I}$, where the dimensions are chosen appropriately. The magnitude of a scalar $a$ and a vector $\bold{x}$ are represented by $|a|, \|\bold{x}\|$, respectively. $E(x)$ denotes the expectation of the random variable $x$.

The rest of this paper is organized as follows. In Section II, we discuss the system model and problem formulation. The proposed approach is discussed in Section III. Section IV presents the proposed algorithm. The numerical results are discussed in Section V. Finally, the conclusions of the paper are presented in Section VI.

\section{System Model and Problem Formulation}

In this section, we first specify the system model and describe the environment of interest. Then we formulate the problem and, finally, describe the existing solution approaches.

\subsection{System Design and Environment}

This subsection presents the main parameters of the system, the underlying assumptions, and important performance metrics for the radar and communication system. In this paper, as shown in Fig. 1, we assume an ISAC transceiver equipped with $N$ element uniform linear array (ULA) serves $K$ single antenna CUs. In addition to serving the CUs, we assume that a target must also be detected which can be at an angle $\theta_T \in \{\theta_1,\theta_2,\cdots, \theta_I\}$ from the ISAC transmitter\footnote{While the possible angle of arrivals is assumed to be members of the discrete set, the proposed approach can also be extended to the case where the possible angle of arrivals are members of a continuous set.}. Additionally, we assume that there are $J$ clutters that also reflect the signal and cause signal-dependent interference at the ISAC receiver.

\begin{figure}[t]
\begin{center}
\includegraphics[width=\columnwidth]{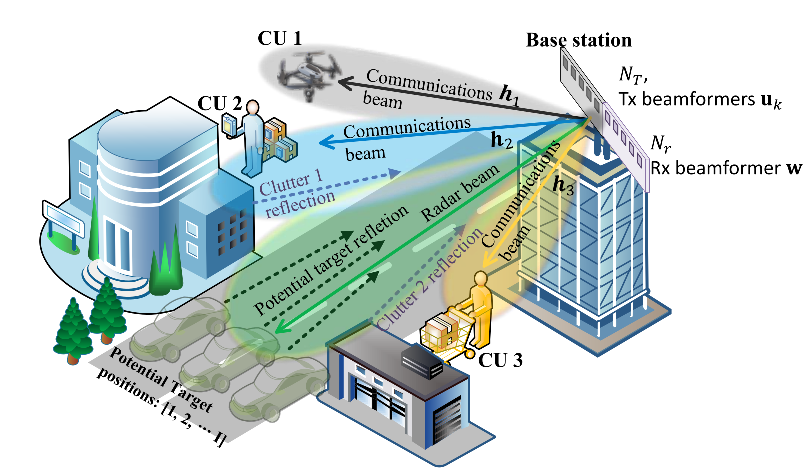}
\end{center}
\vspace{-4mm}
\caption{The presumed multiuser downlink integrated communications and sensing scenario with $K=3$ CUs and $I=3$ possible directions of target. The light shade of the targets indicate the uncertainty in the target direction.}
\label{fig:sys_model}
\vspace{-0mm}
\end{figure}

From the available $N$ antennas at the ISAC transceiver station, we assume that the $N_t$ antennas are used to transmit information to the CUs while it uses the $N_r$ antennas to receive reflection from the target to detect the target. The beamforming vector and the information symbol for $k$-th CU are denoted by $\bold{u}_k \in \mathbb{C}^{N_t \times 1}$, and $s_k$, respectively. In the following, we assume $E[s_k]=0,E[|s_k|^2]=1, E[s_js_k^*]=0,$ for $j \neq k$.

The overall transmitted symbol from the ISAC transmitter can be written as
\bea
\bold{x}=\sum_{k=1}^K\bold{u}_ks_k.
\eea

Furthermore, the channel vector between the transmitter and $k$-th CU is represented by $\bold{h}_k \in \mathbb{C}^{N_t \times 1}$, which is assumed to be known at the transmitter \cite{Lchen1,lc,Ashraf1}. Next, we present the performance metrics for communication and radar systems, respectively.
\subsubsection{Communication System Performance Metric}
With the above assumptions, the received signal at the $k$-th user can be written as 
\bea
y_k = \bold{h}_k^H\bold{u}_ks_k+\sum_{i=1,i\neq k}^K\bold{h}_k^H\bold{u}_is_i+\omega_k,
\eea
where $\omega_k\in \mathbb{C}$ is the additive white Gaussian noise (AWGN) at CU $k$ with mean zero and variance $N_0$. 

Then, the corresponding signal-to-interference-plus-noise ratio (SINR) at the $k$-th CU for first scenario is given as 
\bea
\gamma_k=\frac{|\bold{h}_k^H\bold{u}_k|^2}{\sum_{i=1,i\neq k}^K|\bold{h}_k^H\bold{u}_i|^2+N_o}.
\eea

According to the Shannon formula, the communication rate is an increasing function of the SINR. Therefore, to have satisfactory communication performance, we assume that the $k$-th CU requires its SINR to be at least $\Gamma_k$. Mathematically, this requirement can be represented as 
\bea
\gamma_k=\frac{|\bold{h}_k^H\bold{u}_k|^2}{\sum_{i=1,i\neq k}^K|\bold{h}_k^H\bold{u}_i|^2+N_o} \geq \Gamma_k. \label{EE1}
\eea

\subsubsection{Radar System Performance Metric}

In this subsection, we present the radar system performance metric for two possible scenarios. In the system model considered, we assume that the target is located at an angle $\theta_i^T$ from the ISAC transmitter, and the $j$-th clutter component is located at an angle $\theta_j^C$. We further assume that the values of $\theta_i^T,\theta_j^C$ are known apriori, an assumption widely used in the literature see for example \cite{gc,lc,Lchen1,Ashraf1,Cwen} and reference therein. Then, the signal received on the radar can be written as
\bea
\bold{r}(\theta_i^T)\! =\! \alpha_i^T \bold{a}_r(\theta_i^T) \bold{a}_t^H(\theta_i^T) \bold{x}\! +\! \sum_{j=1}^J\alpha_i \bold{a}_r(\theta_j^C) \bold{a}_t^H(\theta_j^C) \bold{x}\! +\! \bold{n}, \nonumber
\eea
which can be simplified to
\bea
\bold{r}(\theta_i^T)=\alpha_i^T \bold{A}(\theta_i^T)\bold{x}+\sum_{j=1}^J\alpha_j^C \bold{A}(\theta_j^C)\bold{x} + \bold{n},
\eea
where $\bold{A}(\theta)=\bold{a}_r(\theta) \bold{a}_t^H(\theta)\in \mathbb{C}^{N_r \times N_t}$, $\alpha_i^T, \alpha_j^C \in \mathbb{C}$ are the complex channel between target and radar, and between clutter and radar, $\bold{n}\in \mathbb{C}^{N_r \times 1}$ is additive white Gaussian noise (AWGN) with $\bold{n} \sim \mathcal{CN}(\bold{0}, \sigma_r\bold{I})$, $\bold{a}_z(\theta)\in \mathbb{C}^{N_z \times 1}$ is the transmit are receive steering vectors for $z \in \{t,r\}$, respectively. The dependencies of the steering vectors $\bold{a}_t(\theta), \bold{a}_r(\theta)$ on the angle $\theta$ are given as
\bea
\bold{a}_t(\theta) = [1, e^{-j 2 \pi \Delta \sin(\theta)}, \cdots, e^{-j 2 \pi (N_t-1)\Delta \sin(\theta)}]^H,
\eea
\bea
\bold{a}_r(\theta) = [1, e^{-j 2 \pi \Delta \sin(\theta)}, \cdots, e^{-j 2 \pi (N_r-1)\Delta \sin(\theta)}]^H,
\eea
where $\Delta=\frac{\lambda}{2}$ and $\lambda$ is the carrier wavelength. Therefore, to obtain the value of the matrix $\bold{A}(\theta)$, the only information required is $\theta$.

We assume that $\alpha_i^T, \alpha_j^C$ are independently distributed from $\bold{h}_k$'s. After reception, the radar performs receive beamforming (combining) with vector $\bold{w}$ on the received signal, then the output of the radar receiver is given as 
\bea
y_r(\theta_i^T) = \bold{w}^H\bold{r}(\theta_i^T).
\eea

Subsequently, the average radar signal-to-clutter-plus-noise ratio (SCNR) can be written as 
\begin{equation}
\small
\gamma_r(\theta_i^T) = \frac{|\alpha_i^T|^2E[|\bold{w}^H \bold{A}(\theta_i^T) \bold{x}|^2]}{E\left[\bold{w}^H\left(\sum_{j=1}^J|\alpha_j^C|^2\bold{A}(\theta_j^C)\bold{x}\bold{x}^H\bold{A}^H(\theta_j^C)+\sigma_r\bold{I}\right)\bold{w}\right]}, 
\end{equation}
\normalsize
which, after performing the expectation operation, can be written in compact form as
\bea
\small
\bar{\gamma}_r(\theta_i^T) = \frac{\sum_{k=1}^K|\alpha_i \bold{w}^H \bold{A}(\theta_i) \bold{u}_k|^2}{\sum_{j=1}^J\sum_{k=1}^K|\alpha_j\bold{w}^H\bold{A}(\theta_j)\bold{u}_k|^2+\sigma_r\bold{w}^H\bold{w}}. \label{E1}
\eea
\normalsize

It is clear from (\ref{EE1}), (\ref{E1}) that the communication and radar performance depend on the beamforming vectors $\bold{u}_k$'s and $\bold{w}$. In the next subsection, we formulate an optimization problem to find the optimal values of $\bold{u}_k$'s and $\bold{w}$ while considering radar and communication performance simultaneously.

\subsection{Problem Formulation}

In this paper, we are interested in maximizing the detection probability of the radar. As noted above, the radar detection probability is directly proportional to the radar SCNR. Therefore, our aim is to maximize the radar SCNR while satisfying the communication requirements of the CUs. Overall, the mathematical formulation of the optimization problem to find the appropriate beamforming vectors is given as follows.

\textbf{P1:}
\begin{maxi}|s|
{\bold{u}_k,\bold{w}} {\min_{\theta_i^T \in \Theta^T} \gamma_r(\theta_i^T,\bold{w})}{}{}
\addConstraint{C1:~\frac{|\bold{h}_k^H\bold{u}_k|^2}{\sum_{i=1,i\neq k}^K|\bold{h}_k^H\bold{u}_i|^2+N_o}\geq \Gamma_k}
 \addConstraint{C2:~\sum_{k=1}^K \|\bold{u}_k\|^2\leq P_{max}}
 \addConstraint{C3:~ \|\bold{w}\|^2=1,}
\end{maxi} 
where $\Theta^T=\{\theta_1^T,\theta_2^T,\cdots,\theta_I^T\}$ is the set that contains the possible angles of the target with respect to the ISAC transmitter.

In \textbf{P1}, the objective is to maximize the average SCNR of the radar system while considering the ambiguity about the angle of the target. Since the data rate is directly proportional to SINR, constraints $C1$ guarantee that the minimum data rate requirements of the CUs are met. The constraint $C2$ ensures that the total transmitted power is no more than the maximum transmitted power allowed. Finally, $C3$ imposes a constraint on the receive beamformer. It is clear that \textbf{P1} is a nonconvex optimization problem. The nonconvexity is caused by the nonconvex objective function as well as the non-convex constraint functions within $C1$, and $C3$. More specifically, constraint $C1$ requires that a convex function is greater than another convex function which leads to nonconvexity. On the other hand, the constraint $C3$ requires that the convex function be equal to a constant.

In the following subsection, we highlight some critical issues within existing approaches that can be used to solve special cases of \textbf{P1}.
\subsection{Existing Approaches and the Corresponding Issues}
In order to address the difficulty posed by the problem \textbf{P1}, an iterative approach was initially proposed in \cite{gc} for radar systems. This scheme was then extended in \cite{lc,Ashraf1} to incorporate communication systems with only a single possible direction of arrival of the target. In this approach, the main idea is to simplify the objective function by ignoring the dependence of some part, specifically the part introduced within the objective function due to the receive beamformer mathematical expression, of the objective function on the underlying waveform. In the following, we illustrate one major issue with regard to convergence of such an approach.

Note that when the target direction is already known, the SCNR maximization problem can be written as \cite{lc,Ashraf1}
\begin{maxi}|s|
{\bold{u}_k} {|\alpha_0|^2\sum_{k=1}^K \bold{u}_k^H \bold{A}^H(\theta_0)\bold{G}^{-1}\left(\{\bold{u}_k\}\right)\bold{A}(\theta_0)\bold{u}_k}{}{}
\addConstraint{C1, C2,}
\end{maxi}
where
\bea
\bold{G}\left(\!\{\bold{u}_k\}\!\right)\!=\! \sum_{j=1}^J|\alpha_j|^2\bold{A}(\theta_j)\!\left(\sum_{k=0}^K\bold{u}_k\bold{u}_k^H\!\right)\!\bold{A}^H(\theta_j)+\bold{I}.
\eea

Then, the main idea proposed in \cite{gc} is to ignore the dependence of $\bold{G}$ on $\{\bold{u}_k\}$'s and solve the following simplified optimization problem in the $m$-th iteration.
\begin{maxi}|s|
{\bold{u}_k} {|\alpha_0|^2\sum_{k=1}^K \bold{u}_k^H \bold{A}^H(\theta_0)\bold{G}^{-1}\left(\{\bold{u}_k^{m-1}\}\right)\bold{A}(\theta_0)\bold{u}_k}{}{}
\addConstraint{C1, C2,}
\end{maxi}
where $\{\bold{u}_k^{m-1}\}$ is the optimal solution obtained by solving (14) during the $(m-1)$-th iteration. By doing so, the inner matrix $\bold{G}^-1\left(\{\bold{u}_k^{m-1}\}\right)$ becomes a constant parameter of the optimization problem instead of being a function of the optimization variable. Although the simplified problem is still nonconvex, it can be solved through the SDR technique, and it has been proven that the resulting solutions are guaranteed to satisfy the rank-one criteria. However, there is a major issue with regard to the convergence of this approach. To illustrate this, consider the optimization at the $0,1,2$-th iteration. Note that if the underlying optimization problem has a bounded value, which is the case in the problem considered in this paper, and the objective value achieved in successive iterations is nondecreasing, then the convergence is guaranteed. However, this monotonicity property cannot be said to be valid for this existing iterative approach, since the leftmost inequality is not always guaranteed in the following sequence of inequalities
\bea
f(\{\bold{u}_k^2\}) \geq f(\{\bold{u}_k^1\}) \geq f(\{\bold{u}_k^0\}),
\eea
where 
\bea
\!\!\!f(\{\bold{u}_k^0\})\!\! \triangleq \!\! |\alpha_0|^2\sum_{k=1}^K\!\! \left(\bold{u}_k^0\right)^{H}\!\! \bold{A}^H(\theta_0)\bold{G}^{-1}\!\!\left(\{\bold{u}_k^{0}\}\right)\!\!\bold{A}(\theta_0)\bold{u}_k^{0},
\eea
\bea
\!\!\!f(\{\bold{u}_k^1\})\!\! \triangleq \!\! |\alpha_0|^2\sum_{k=1}^K \!\! \left(\bold{u}_k^1\right)^{H}\!\! \bold{A}^H(\theta_0)\bold{G}^{-1}\!\!\left(\{\bold{u}_k^{0}\}\right)\!\! \bold{A}(\theta_0)\bold{u}_k^{1},
\eea
\bea
\!\!\!f(\{\bold{u}_k^2\})\!\! \triangleq \!\! |\alpha_0|^2\sum_{k=1}^K \!\! \left(\bold{u}_k^2\right)^{H}\!\! \bold{A}^H(\theta_0)\bold{G}^{-1}\!\!\left(\{\bold{u}_k^{1}\}\right)\!\!\bold{A}(\theta_0)\bold{u}_k^{2},
\eea
since the choice of $\{\bold{u}_k^2\}$ is based on maximizing
\bea
|\alpha_0|^2\sum_{k=1}^K \bold{u}_k^H \bold{A}^H(\theta_0)\bold{G}^{-1}\left(\{\bold{u}_k^{1}\}\right)\bold{A}(\theta_0)\bold{u}_k,
\eea
and not on maximizing 
\bea
|\alpha_0|^2\sum_{k=1}^K \bold{u}_k^H \bold{A}^H(\theta_0)\bold{G}^{-1}\left(\{\bold{u}_k^{0}\}\right)\bold{A}(\theta_0)\bold{u}_k.
\eea

Note that this violation of monotonocity occurs because one of the parameters of the optimization problem (specifically the matrix $\bold{G}^{-1}\left(\{\bold{u}_k^{i}\}\right)$) is different in each iteration. Since the optimization problem considered in this paper is a more generalized version of the problems considered in \cite{gc,lc,Ashraf1}, this monotonicity issue will also arise when the approaches proposed in those papers are applied to solve the SCNR maximization problem formulated in this paper.

In the following section, we present an efficient alternating optimization-based algorithm to solve \textbf{ P1}, which uses only a single receive beamformer for the detection process, and convergence is also guaranteed.

\section{Proposed Optimization Framework with Single Receive Beamformer}

In this section, we discuss the iterative optimization approach for solving the optimization problem \textbf{P1}. For ease of readability, we divide this section into two subsections. The first subsection discusses optimization with respect to $\bold{u}_k$'s for fixed $\bold{w}$; the second subsection discusses optimization with respect to $\bold{w}$ for fixed $\bold{u}_k$'s. 

As discussed in Section II, the coupling of the optimization variables makes it difficult to solve \textbf{P1}. To handle this difficulty, we use an alternating optimization approach, where within each iteration, the optimization is first performed on the information beamformers $\bold{u}_k$'s while fixing the receive combiner $\bold{w}$ and then the optimization is performed over receive beamformer $\bold{w}$ while fixing information beamformers $\bold{u}_k$'s. Before proceeding, we highlight an important property of the \textbf{P1}'s solution in the following lemma.
\begin{lemma}
    In the optimal solution of \textbf{P1} the power budget constraint is always met with equality. Mathematically, we have
    \bea
    \sum_{k=1}^K \|\bold{u}_k^*\|^2 = P_{max}.
    \eea
\end{lemma}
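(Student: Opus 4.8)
The plan is to argue by contradiction via a uniform rescaling of the transmit beamformers. Suppose $(\{\bold{u}_k^*\},\bold{w}^*)$ is optimal for \textbf{P1} but $\sum_{k=1}^K\|\bold{u}_k^*\|^2 = P_0 < P_{max}$. I would introduce $c \triangleq \sqrt{P_{max}/P_0}>1$ and form the candidate point $\hat{\bold{u}}_k \triangleq c\,\bold{u}_k^*$ for all $k$, while keeping $\hat{\bold{w}}\triangleq\bold{w}^*$ unchanged. Then $\sum_k\|\hat{\bold{u}}_k\|^2 = c^2 P_0 = P_{max}$, so $C2$ holds with equality and $C3$ is trivially preserved since $\bold{w}$ is untouched.

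The next step is to verify $C1$ at $\{\hat{\bold{u}}_k\}$. Under $\bold{u}_k\mapsto c\,\bold{u}_k$, the signal term $|\bold{h}_k^H\bold{u}_k|^2$ and each interference term $|\bold{h}_k^H\bold{u}_i|^2$ are multiplied by $c^2$, whereas the noise level $N_o$ is fixed; since $c\ge 1$ implies $c^2 N_o\ge N_o$, the SINR at CU $k$ evaluated at $\{\hat{\bold{u}}_k\}$ is at least $c^2|\bold{h}_k^H\bold{u}_k^*|^2/\big(c^2(\sum_{i\neq k}|\bold{h}_k^H\bold{u}_i^*|^2+N_o)\big)=\gamma_k^*\ge\Gamma_k$, so all QoS constraints remain satisfied (in fact they are only relaxed).

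I would then show that the objective strictly increases, contradicting optimality. Using $\|\bold{w}^*\|^2=1$, for each $\theta_i^T\in\Theta^T$ the SCNR has the form $\gamma_r(\theta_i^T,\hat{\bold{w}})=c^2 a_i/(c^2 b_i+\sigma_r)$, where $a_i\ge 0$ is the (unscaled) target return power and $b_i\ge 0$ the corresponding clutter power in direction $\theta_i^T$, because the numerator and the clutter part of the denominator both scale by $c^2$ whereas the noise term $\sigma_r\bold{w}^{*H}\bold{w}^*=\sigma_r$ does not. Whenever $a_i>0$, the map $t\mapsto t a_i/(t b_i+\sigma_r)$ is strictly increasing (its derivative equals $a_i\sigma_r/(t b_i+\sigma_r)^2>0$), and the pointwise minimum over the finite set $\Theta^T$ of strictly increasing functions is itself strictly increasing in $t=c^2$; hence $\min_i\gamma_r(\theta_i^T,\hat{\bold{w}})>\min_i\gamma_r(\theta_i^T,\bold{w}^*)$, contradicting the optimality of $(\{\bold{u}_k^*\},\bold{w}^*)$. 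Therefore $\sum_k\|\bold{u}_k^*\|^2=P_{max}$.

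The only delicate point I anticipate is the positivity condition $a_i>0$ at the direction attaining the minimum, i.e. that the optimal SCNR is strictly positive; otherwise the rescaling yields only weak monotonicity. I would dispose of this by noting that $\bold{A}(\theta)=\bold{a}_r(\theta)\bold{a}_t^H(\theta)$ has rank one, so $a_i=0$ would force $\bold{w}^*\perp\bold{a}_r(\theta_i^T)$ (or every $\bold{u}_k^*\perp\bold{a}_t(\theta_i^T)$) -- a configuration no maximizer would select unless the optimal value is identically zero -- and in that degenerate case every feasible point, the full-power rescaling included, is also optimal, so the statement may be taken to hold for a suitably chosen optimal solution without loss of generality.
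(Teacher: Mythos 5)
Your proof follows essentially the same route as the paper's: a contradiction argument showing that any feasible point using less than $P_{max}$ can be strictly improved by raising the transmit power, because the receiver noise term $\sigma_r\|\bold{w}\|^2$ in the SCNR denominator does not scale with the beamformers. Your version is in fact more complete than the paper's sketch --- it makes the scaling $c=\sqrt{P_{max}/P_0}$ explicit, verifies that the SINR constraints are preserved under uniform scaling, and flags the degenerate zero-SCNR case in which the paper's implicit strict-inequality claim would fail.
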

\begin{proof}
    Let us denote the minimum power required to satisfy the SINR constraints of CUs by $P_{min}$. Then, it can be shown that \textbf{P1} is feasible for any values of $P_{max} \geq P_{min}$. Next, using the contradiction, it can be easily shown that if $\sum_{k=1}^K\hat{\bold{u}}_k^H\hat{\bold{u}}_k = P_1 \geq P_{min}$ then the value of $\bar{\gamma}_r(\theta_i^T)$ $\forall~ i \in\{1,2, \cdots, I\}$ achieved by using $\hat{\bold{u}}_k$ is always lower than that achieved by choosing $\tilde{\bold{u}}_k$ such that $P_{max} \geq \sum_{k=1}^K\tilde{\bold{u}}_k^H\tilde{\bold{u}}_k = P_2 > P_1 \geq P_{min}$. Since the objective value does not decrease with each $\bar{\gamma}_r(\theta_i^T)$, we conclude that we must have $\sum_{k=1}^K \|\bold{u}_k^*\|^2 = P_{max}$ in the optimal solution. The proof is complete.  
\end{proof}

\subsection{Optimizing $\bold{u}_k$'s for Fixed $\bold{w}$}

For fixed value of $\bold{w}$ such that $\bold{w}^H\bold{w}=1$, \textbf{P1} can be written as follows:
\begin{maxi}|s|
{\bold{u}_k}{\min_{\theta_i^T \in \Theta^T} \frac{|\alpha_i^T|^2\sum_{k=1}^K \bold{u}_k^H\bold{\Phi}_i^T\bold{u}_k}{\sum_{j=1}^J|\alpha_j^C|^2\sum_{k=1}^K\bold{u}_k^H\bold{\Phi}_j^C\bold{u}_k+\sigma_r}}{}{}
\addConstraint{C1}
\addConstraint{\sum_{k=1}^K \|\bold{u}_k\|^2 = P_{max},} \label{16}
\end{maxi}
where 
\bea
\bold{\Phi}_i^T = \bold{A}^H(\theta_i^T)\bold{w}\bold{w}^H\bold{A}(\theta_i^T), \label{adef}
\eea
\bea
\bold{\Phi}_j^C = \bold{A}^H(\theta_j^C)\bold{w}\bold{w}^H\bold{A}(\theta_j^C).
\eea

Still, the problem (\ref{16}) is intractable due to minimization over the quadratic fractional terms in the objective function. To address this issue, we use an iterative approach. Toward this direction, we introduce the following notations
\begin{align}
\Tilde{\bold{\Phi}}_i^T = \bold{I}_{K}\otimes |\alpha_i^T|^2 \bold{\Phi}_i^T,
\end{align}
\vspace{-6mm}
\begin{align}
\bold{\Phi}^C = \sum_{j=1}^J |\alpha_j^C|^2 \bold{\Phi}_j^C,
\end{align}
\vspace{-6mm}
\begin{align}
\Tilde{\bold{\Phi}} = \bold{I}_{K} \otimes \bold{\Phi}^C + \frac{N_r}{P}\bold{I}_{KN_t}, \label{bdef}
\end{align}
\vspace{-6mm}
\begin{align}
\Tilde{\bold{H}}_k^s = \diag(\bold{e}_k) \otimes \bold{H}_k,
\end{align}
\vspace{-6mm}
\begin{align}
\Tilde{\bold{H}}_k^{int} = \left(\bold{I}_{K}-\diag({\bold{e}_k})\right) \otimes \bold{H}_k + \frac{N_0}{P} \bold{I}_{KN_t},
\end{align}
\vspace{-6mm}
\begin{align}
\bold{u} = \left[\vect({\bold{u}}_1),\cdots, \vect({\bold{u}}_K)\right].
\end{align}

Then, it can be shown that the problem (\ref{16}) can be equivalently written as:
\begin{maxi}|s|
{\bold{u}}{\min_{\theta_i^T \in \Theta^T} \frac{\bold{u}^H\Tilde{\bold{\Phi}}_i^T\bold{u}}{\bold{u}^H\Tilde{\bold{\Phi}}^C\bold{u}}}{}{}
\addConstraint{\bold{u}^H\bold{u}=P}
\addConstraint{\bold{u}^H\Tilde{\bold{H}}_k^s\bold{u}\geq \gamma_k \bold{u}^H\Tilde{\bold{H}}_k^{int}\bold{u}.}
\end{maxi}

By introducing a large penalty factor $\eta \to \infty$, the above problem can be replaced by the following alternative\footnote{For a very large value of $\eta$ the term $\sum_{i=1}^I\!\left(\|\Tilde{\bold{\Phi}}_i^{T\frac{1}{2}}\bold{u}\|_2\!-\!\sqrt{\lambda_i}\|\Tilde{\bold{\Phi}}^{C\frac{1}{2}}\bold{u}\|_2\!\!\right)^2$ becomes zero for the optimal solution, which is equivalent to $\lambda_i  = \frac{\bold{u}^H\Tilde{\bold{\Phi}}_i^T\bold{u}}{\bold{u}^H\Tilde{\Phi}^C\bold{u}}, \forall i \in \{1,2,\cdots,I\}$.} problem
\begin{maxi}|s|
{\bold{u}, \lambda_i}{\min_{i \in \{1,\cdots, I\}}\!\! \{\lambda_i\}\!-\!\eta \! \sum_{i=1}^I\!\left(\|\Tilde{\bold{\Phi}}_i^{T\frac{1}{2}}\bold{u}\|_2\!-\!\sqrt{\lambda_i}\|\Tilde{\bold{\Phi}}^{C\frac{1}{2}}\bold{u}\|_2\!\!\right)^2}{}{}
\addConstraint{\bold{u}^H\bold{u}=P}
\addConstraint{\bold{u}^H\Tilde{\bold{H}}_k^s\bold{u}\geq \gamma_k \bold{u}^H\Tilde{\bold{H}}_k^{int}\bold{u}}
\addConstraint{\lambda_i \geq 0.}\label{17}
\end{maxi}

It can be shown that (\ref{17}) is equivalent to the following problem
\begin{maxi}|s|
{\bold{u}, \lambda_i, \bold{Q}_i}{\min_{i \in \{1,\cdots, I\}}\!\! \{\lambda_i\}\!-\!\eta \! \sum_{i=1}^I\!\|\Tilde{\bold{\Phi}}_i^{T\frac{1}{2}}\bold{u}\!-\!\sqrt{\lambda_i}\bold{Q}_i\Tilde{\bold{\Phi}}^{C\frac{1}{2}}\bold{u}\|^2}{}{}
\addConstraint{\bold{u}^H\bold{u}=P}
\addConstraint{\bold{u}^H\Tilde{\bold{H}}_k^s\bold{u}\geq \gamma_k \bold{u}^H\Tilde{\bold{H}}_k^{int}\bold{u}}
\addConstraint{\lambda_i \geq 0}
\addConstraint{\bold{Q}_i^H\bold{Q}_i=\bold{I}_{KN_t}.}
\end{maxi}

The equivalence between (32) and (33) is guaranteed since for any values of $\bold{u},\lambda_i$, the optimal value of $\bold{Q}_i$ aligns the vector $\Tilde{\bold{\Phi}}^{C\frac{1}{2}}\bold{u}$ in the same direction as $\Tilde{\bold{\Phi}}_i^{T\frac{1}{2}}\bold{u}$ \cite{gnp}. Next, we discuss the alternating optimization approach for solving (33) with respect to $\bold{u}$, $\bold{Q}_i$ and $\lambda_i$.
\subsubsection{Optimization with respect to $\bold{u}$}
The optimization problem with respect to $\bold{u}$ for fixed values of $\lambda_i,\bold{Q}_i$ can be written as
\begin{maxi}|s|
{\bold{u}}{-\!\eta  \bold{u}^H\bold{R}\bold{u}}{}{}
\addConstraint{\bold{u}^H\bold{u}=P}
\addConstraint{\bold{u}^H\Tilde{\bold{H}}_k^s\bold{u}\geq \gamma_k \bold{u}^H\Tilde{\bold{H}}_k^{int}\bold{u},} \label{18}
\end{maxi}
where
\begin{align}
\bold{R}\! =\!\! \sum_{i=1}^I\!\left(\Tilde{\bold{\Phi}}_i^T\!\!+\!\lambda_i \Tilde{\bold{\Phi}}^C\!\!-\!\sqrt{\lambda_i}\left(\Tilde{\bold{\Phi}}_i^{T\frac{1}{2}}\bold{Q}_i\Tilde{\bold{\Phi}}^{C\frac{1}{2}}\!\!+\Tilde{\bold{\Phi}}^{C\frac{1}{2}}\bold{Q}_i^H\Tilde{\bold{\Phi}}_i^{T\frac{1}{2}}\right)\right).
\end{align}
Note that the matrix $\bold{R}$ is not necessarily a semidefinite matrix. To convert it to a semidefinite matrix, we define a matrix $\hat{\bold{R}}$ such that
\bea
\hat{\bold{R}} = \mu \bold{I}_{KN_t} - \bold{R},
\eea
where $\mu>0$ is chosen such that
\bea
\hat{\bold{R}} \succeq \bold{0}.
\eea

Next, we use the observation highlighted in Lemma 1 that $\sum_{k=1}^K\|\bold{u}_k^*\|^2=\|\bold{u}^*\|^2=P$. Hence, problem (\ref{18}) can be equivalently written as follows:
\begin{maxi}|s|
{\bold{u}}{\!  \bold{u}^H\hat{\bold{R}}\bold{u}}{}{}
\addConstraint{\bold{u}^H\bold{u}=P}
\addConstraint{\bold{u}^H\Tilde{\bold{H}}_i^s\bold{u}\geq \gamma_i \bold{u}^H\Tilde{\bold{H}}_i^{int}\bold{u}.} \label{19}
\end{maxi}

It is clear that the problem (\ref{19}) is a non-convex optimization problem, and therefore it is not possible to obtain a global optimal solution. To solve this problem, we use the majorization-minorization approach \cite{MM,MM1,MM3}. Specifically, we employ an iterative procedure to solve (\ref{19}). In this context, we note that for any convex quadratic function $f(\bold{x})= \bold{x}^H\bold{A}\bold{x}$ with $\bold{A}\succeq \bold{0}$, at any point $\bold{x}_0$ we have
\bea
f(\bold{x}) \geq 2 \Re(\bold{x}^H\bold{A}\bold{x}_0) - \bold{x}_0^H\bold{A}\bold{x}_0 \triangleq \hat{f}(\bold{x},\bold{x}_0)
\eea
 for any $\bold{x}_0$. Furthermore, at $\bold{x}=\bold{x}_0$ we have \bea
f(\bold{x}_0) = \hat{f}(\bold{x}_0,\bold{x}_0).
 \eea

Furthermore, it is easy to show that $\hat{f}(\bold{x},\bold{x}_0)$ satisfies the following gradient equality at $\bold{x}=\bold{x}_0$
 \bea
 \Delta f(\bold{x})|_{\bold{x}_0} = \Delta \hat{f}(\bold{x},\bold{x}_0)|_{\bold{x}=\bold{x}_0}
 \eea

Hence, $\hat{f}(\bold{x},\bold{x}_0)$ is a valid surrogate function for $f(\bold{x})$. With the help of the surrogate function, during the $s$-th inner iteration, we replace the objective function and the left-hand side of SINR constraints with their respective surrogate functions as follows:
\bea
\bold{u}^H\hat{\bold{R}}\bold{u} \geq  2\Re(\bold{u}^H\hat{\bold{R}}\bold{u}_{d,s-1})-\bold{u}_{d,s-1}^H\hat{\bold{R}}\bold{u}_{d,s-1}, \\
\bold{u}^H\Tilde{\bold{H}}_i^s\bold{u} \geq  2 \Re(\bold{u}^H\Tilde{\bold{H}}_i^s\bold{u}_{d,s-1}) - \bold{u}_{d,s-1}^H\Tilde{\bold{H}}_i^s\bold{u}_{d,s-1},
\eea
where $\bold{u}_{d,s-1}$ is the optimal value achieved at the end of $(s-1)$-th inner iteration within $d$-th outer iteration. Although by substituting the above surrogate functions, the objective function and the SINR constraints have become convex, the power budget constraint is still nonconvex. To deal with this constraint, we use the penalty-based method. In particular, we introduce a new non-negative optimization variable $b$ and replace the power budget constraint with the following constraints
\bea
\bold{u}^H\bold{u} \leq P + b,
\eea
\bea
\bold{u}^H\bold{u}\! \geq\! P\!-\!b \simeq 2 \Re(\bold{u}^H\bold{u}_{d,s-1})-\|\bold{u}_{d,s-1}\|_2^2 \geq P-b.
\eea

It can be seen that the constraints $\bold{u}^H\bold{u} \leq P + b$ and $\bold{u}^H\bold{u}\! \geq\! P\!-\!b$ together with $b\geq 0 $ imply $\bold{u}^H\bold{u} = P$. Note also that, when $b=0$ the constraints $\bold{u}^H\bold{u} \leq P + b$, and $2 \Re(\bold{u}^H\bold{u}_{d,s-1})-\|\bold{u}_{d,s-1}\|_2^2 \geq P-b$ imply $\bold{u}^H\bold{u}=P$. Therefore, in order to ensure that $b \to 0$ is the optimal solution, we introduce a penalty term, $\nu b$, in the objective function. With these modifications, problem (\ref{19}) can be relaxed to the following optimization problem.
\begin{maxi}|s|
{\bold{u},b}{\!  2\Re(\bold{u}^H\hat{\bold{R}}\bold{u}_0)-\bold{u}_0^H\hat{\bold{R}}\bold{u}_0 -\nu b }{}{}
\addConstraint{\bold{u}^H\bold{u}\leq P + b}
\addConstraint{2 \Re(\bold{u}^H\bold{u}_0)-\|\bold{u}_0\|_2^2\geq P-b}
\addConstraint{b\geq 0}
\addConstraint{2 \Re(\bold{u}^H\Tilde{\bold{H}}_i^s\bold{u}_0) - \bold{u}_0^H\Tilde{\bold{H}}_i^s\bold{u}_0 \geq \gamma_i \bold{u}^H\Tilde{\bold{H}}_i^{int}\bold{u},} \label{upro}
\end{maxi}
where $\nu$ is a very large positive number. Note that problem (\ref{upro}) is a convex optimization problem and can be easily solved with the help of off-the-shelf optimization tools such as CVX.

\subsubsection{Optimization with respect to $\bold{Q}_i$'s}

For fixed values of $\bold{u},\lambda_i$, problem (30) can be converted into subproblems $I$, where the subproblem $i$-th is given as
\begin{mini}|s|
    {\bold{Q}_i}{\|\Tilde{\bold{\Phi}}_i^{T\frac{1}{2}}\bold{u}\!-\!\sqrt{\lambda_i}\bold{Q}_i\Tilde{\bold{\Phi}}^{C\frac{1}{2}}\bold{u}\|^2}{}{}
    \addConstraint{\bold{Q}_i^H\bold{Q}_i=\bold{I}.}
\end{mini}
The optimal solution for (47) is given as \cite{gnp}
\bea
\bold{Q}_i = \bold{Q}_{\bold{a}_i}\bold{Q}_{\bold{b}_i}^H, \label{qs}
\eea
where 
\bea
\bold{Q}_{\bold{a}_i} = [\bold{a}_{i1},\bold{a}_{i2},\cdots,\bold{a}_{iKN_t}],
\eea
\bea
\bold{Q}_{\bold{b}_i} = [\bold{b}_{i1},\bold{b}_{i2},\cdots,\bold{b}_{iKN_t}],
\eea
\bea
\bold{a}_{i1} = \frac{\Tilde{\bold{\Phi}}_i^{T\frac{1}{2}}\bold{u}}{\|\Tilde{\bold{\Phi}}_i^{T\frac{1}{2}}\bold{u}\|_2}, \bold{b}_{i1} = \frac{\Tilde{\bold{\Phi}}^{C\frac{1}{2}}\bold{u}}{\|\Tilde{\bold{\Phi}}^{C\frac{1}{2}}\bold{u}\|_2}, 
\eea
and
\bea
\bold{a}_{il}^H\bold{a}_{im} = \begin{cases}
  1  & \text{ if } l= m, \\
  0 & \text{ if } l \neq m,
\end{cases}
\eea
\bea
\bold{b}_{il}^H\bold{b}_{im} = \begin{cases}
  1  & \text{ if } l= m, \\
  0 & \text{ if } l \neq m.
\end{cases}
\eea

Note that after obtaining the values of $\bold{a}_{i1},\bold{b}_{i1}$ from the closed-form expressions in (51), the remaining orthonormal vectors, $\bold{a}_{il},\bold{b}_{il}~ \forall ~l\in\{2,\cdots,KN_t\}$, can be obtained by using the well-known Gram-Schmidt orthogonalization procedure. 

\subsubsection{Optimization with respect to $\lambda_i$'s}

For a fixed value of $\bold{u}$, the optimization problem with respect to $\lambda_i$'s can be written as
\begin{maxi}|s|
{\lambda_i}{\min_{i \in \{1,\cdots, I\}}\!\! \{\lambda_i\}\!-\!\eta \! \sum_{i=1}^I\!\left(x_i\!-\!\sqrt{\lambda_i}y\right)^2}{}{}
\addConstraint{\lambda_i\geq 0,} \label{lampro}
\end{maxi}
where 
\bea
x_i = \|\Tilde{\bold{\Phi}}_i^{T\frac{1}{2}}\bold{u}\|_2, y = \|\Tilde{\bold{\Phi}}^{C\frac{1}{2}}\bold{u}\|_2.
\eea

Since the point-wise minimum of concave functions is also a concave function, we conclude that (54) is a convex optimization problem. Therefore, a global optimal solution for (54) can be obtained using CVX. 

\subsection{Optimizing $\bold{w}$ for Fixed $\bold{u}_k$'s}

For fixed values of $\bold{u}_k$'s, the SCNR maximization problem can be written as 
\begin{maxi}|s|
    {\bold{w}}{\min_{\theta_i^T \in \Theta^T} \frac{|\alpha_i^T|^2\bold{w}^H\sum_{k=1}^K\bold{B}^T_{i,k}\bold{w}}{\bold{w}^H\left[\sum_{j=1}^J|\alpha_j^C|^2\sum_{k=1}^K\bold{B}^C_{j,k}+\bold{I}\right]\bold{w}}}{}{}
    \addConstraint{\|\bold{w}\|^2=1,}
\end{maxi}
where 
\bea
\bold{B}^l_{m,k}=\bold{A}(\theta_m^l) \bold{u}_k\bold{u}_k^H\bold{A}^H(\theta_m^l), \label{bs}
\eea
$~\forall~ l \in \{C,T\},~ \forall ~ m \in \{i,j\}, ~\forall k \in \{1,\cdots,K\}.$ By introducing a new matrix variable $\bold{W}=\bold{ww}^H$, (56) can be equivalently written as
\begin{maxi}|s|
    {\bold{W}}{\min_{\theta_i^T \in \Theta^T} \frac{|\alpha_i^T|^2\Tr\left(\bold{W}^H\sum_{k=1}^K\bold{B}^T_{i,k}\right)}{\Tr\left(\bold{W}^H\left[\sum_{j=1}^J|\alpha_j^C|^2\sum_{k=1}^K\bold{B}^C_{j,k}+\bold{I}\right]\right)}}{}{}
    \addConstraint{\Tr\left(\bold{W}\right)=1}
    \addConstraint{\rank\left(\bold{W}\right)=1.}
\end{maxi}

It is clear that (58) is a nonconvex optimization problem due to the non-convex rank constraint and the fractional form of the objective function. However, it can be shown that each of the matrices $\bold{B}_{m,k}^l$ is a Toeplitz matrix \cite{Ashraf1}. This means that (58) is a special case of the generalized fractional program with Toeplitz quadratics. For this class of problems, it was shown in \cite{dem} that the global optimal solution can be obtained with Dinkleback's algorithm in polynomial time. 

\section{Proposed Iterative Algorithm}
Based on the above discussion, we propose an iterative algorithm to solve \textbf{P1}. The proposed iterative algorithm is shown as \textbf{Algorithm 1} at the top of next page. First, the initialization is done in lines 1-2. Then, the outer loop starts from line 3. The alternating optimization with respect to $\bold{u},\bold{Q}_i,\lambda_i$, and $\bold{w}$ is done within the outer loop (lines 3-24). Then, an inner loop (lines 4-13) is used to perform optimization with respect to $\bold{u}$. Optimization with respect to $\bold{Q}_i,\lambda_i$ and $\bold{w}$ is carried out in lines 15, 16, and 18, respectively. Next, a check for the convergence is performed on line 20. Finally, the acquired optimal solutions are output in line 25. In the following, we discuss the convergence of \textbf{Algorithm 1}.
\begin{algorithm}[t]
\renewcommand{\thealgorithm}{1:}
\caption{Proposed iterative alternating optimization algorithm for solving \textbf{P1}.}\label{euclid_1}
\begin{algorithmic}[1]
\State Initialize $s=0,d=1, \epsilon, \lambda_i, \bold{Q}_i, \eta, \mu, \nu$, $d_{max}$, $S$ $\bold{w}$,$ \bold{u}_k, \forall~ k \in \{1, \cdots, K\}$, $\bold{A}(\theta_i^T), \bold{A}(\theta_j^C), \forall~ i \in \{1,\cdots, M\}, \forall~ j \in \{1,\cdots,J\}$.
\State Initialize $\bold{u}_{0,S}$ through some communication-based resource allocation scheme.
\While{$d \leq d_{max}$}
\While{$s < S+1$}
\If{$s=0$}
\State $\bold{u}_0=\bold{u}_{d,0}=\bold{u}_{d-1,S}$.
\Else 
\State $\bold{u}_0 = \bold{u}_{d,s-1}$.
\EndIf
\State Solve problem (\ref{upro}) with respect to $\bold{u}$ for fixed values of $\bold{w}, \lambda_i, \bold{Q}_i$ and obtain the values of $\bold{u}_k$'s from $\bold{u}$.
\State Store the optimal solution as $\bold{u}_{d,s}$.
\State $s = s+1$.
\EndWhile
\State $s=0$.
\State Obtain the values of $\bold{Q}_i$'s using (\ref{qs}).
\State Solve problem (\ref{lampro}) and obtain the values of $\lambda_i$.
\State Obtain the values of $\bold{B}_{m,k}^l$'s through (\ref{bs}).
\State Solve problem (58) and denote the solution obtained with $\bold{w}^d$.
\State Store the objective value obtained as $\chi\left(\bold{w}^d,{\{\bold{Q}_c^d\},\!\bold{u}_d,\!\{\lambda_i^d\}}\right)$.
\If{$\bigg|\chi\left({\bold{w}^{d},\{\bold{Q}_i^{d}\},\!\bold{u}_{d},\!\{\lambda_i^{d}\}}\right)-\chi\left({\bold{w}^{d-1},\{\bold{Q}_i^{d-1}\},\!\bold{u}_{d-1},\!\{\lambda_i^{d-1}\}}\right)\bigg|\leq \epsilon$}
\State Break
\EndIf
\State Set $d = d + 1$.
\EndWhile
\State Output $\bold{u}_k^*$, $\bold{w}^*$, where $\bold{u}_k^*= \bold{u}_k^{d-1}$, $\bold{w}^*=\bold{w}^{d-1}$.
\end{algorithmic}
\end{algorithm}

Next, we discuss the convergence of \textbf{Algorithm 1}. The proposed algorithm has an outer loop, and it is clear that the algorithm optimizes the optimization variables in an alternating manner within each outer iteration. Let us denote by $\chi(\bold{w}^d,\{\bold{Q}_i^d\},\bold{u}_d,\{\lambda_i^d\})$ the objective value achieved at the end of the $d$-th outer iteration. Mathematically,
\bea
&\chi\bigg(\bold{w}^{d},\{\bold{Q}_i^d\},\!\bold{u}_d,\!\{\lambda_i^d\}\bigg)\! \nonumber \\& \triangleq \!\! \min_{i \in \{1,\cdots, I\}}\!\! \{\lambda_i^d\}\!-\!\eta \! \sum_{i=1}^I\!\|\Tilde{\bold{\Phi}}_i^{T\frac{1}{2}}\bold{u}_d\!-\!\sqrt{\lambda_i^d}\bold{Q}_i^d\Tilde{\bold{\Phi}}^{C\frac{1}{2}}\bold{u}_d\|^2.
\eea

Note that the dependence of $\chi\left(\{\bold{Q}_i^d\},\!\bold{u}_d,\!\{\lambda_i^d\}\right)$ on $\bold{w}$ is hidden within $\Tilde{\bold{\Phi}}_i^{T\frac{1}{2}}$ and $\Tilde{\bold{\Phi}}^{C\frac{1}{2}}$ as can be checked through (\ref{adef})-(\ref{bdef}). Then, to prove the convergence of the proposed algorithm, we must show that the following condition is always satisfied
\bea
\chi\left(\{\bold{Q}_i^{d}\},\bold{u}_{d},\{\lambda_i^{d}\}\right) \geq \chi\left(\{\bold{Q}_i^{d-1}\},\bold{u}_{d-1},\{\lambda_i^{d-1}\}\right).
\eea

As detailed above, the optimization problems with respect to $\{\bold{Q}_i\}$, $\{\lambda_i\}$ and $\bold{w}$ can be solved globally optimally. Therefore, the objective value achieved as a result of optimization over $\{\bold{Q}_i\}$, $\{\lambda_i\}$, and $\bold{w}$ is always nondecreasing. Mathematically, this statement can be written as
\bea
\chi\!\left(\bold{w}^{d}\!,\!\{\bold{Q}_i^{d}\},\!\bold{u}_{d\!-\!1},\!\{\lambda_i^{d}\}\right)\!\! \geq \!\!\chi\!\left(\bold{w}^{d\!-\!1}\!,\!\{\bold{Q}_i^{d\!-\!1}\},\!\bold{u}_{d\!-\!1},\!\{\lambda_i^{d\!-\!1}\}\!\right).
\eea

Next, with respect to $\bold{u}$, we note that for a very large value of the penalty factor, i.e. $\nu \to \infty$, the optimal value of $b$ in (\ref{upro}) will approach zero, that is, $b^* \to 0$. Note also that (\ref{lampro}) is a convex optimization problem for which a global optimal solution can be achieved. This means that the objective value achieved at the end of each inner iteration is always nondecreasing. Now combining this with the following facts: (i) the objective value of (\ref{upro}) is bounded from above, (ii) $b^* \to 0$, and (iii) the equality of the surrogate function, we conclude that $\bold{u}_{d,s}^H\hat{\bold{R}}\bold{u}_{d,s}\geq \bold{u}_{d,s-1}^H\hat{\bold{R}}\bold{u}_{d,s-1}$. Since we use $\bold{u}_{d-1,S}=\bold{u}_{d}=\bold{u}_{d,0}$, it can be established that 
\bea
\chi\!\left(\bold{w}^d\!,\!\{\bold{Q}_i^{d}\},\bold{u}_{d},\!\{\!\lambda_i^{d}\}\!\right)\! \geq \! \chi\!\left(\bold{w}^{d\!-\!1}\!,\!\{\bold{Q}_i^{d\!-\!1}\},\!\bold{u}_{d\!-\!1},\!\{\!\lambda_i^{d-1}\}\!\right).
\eea

Furthermore, since the problem has a bounded objective value, the monotonicity of the achieved objective value in each iteration guarantees the convergence of \textbf{Algorithm 1}.

\section{Numerical Results}

In this section, we present the numerical results. Unless otherwise specified, the important parameters of the system are provided in Table 1. In the following, we illustrate four types of numerical results. First, we show the SCNR performance improvement achieved by the proposed algorithm. Second, we show the convergence result of the proposed iterative algorithm. Third, we show the effect of the number of possible directions of the target and its angular spread on the achieved SCNR. Finally, a comparison between the receiver's beampattern obtained from the proposed algorithm and those obtained by using dedicated receive beamformers for each possible direction is shown.

\begin{table}
\caption{System parameters.}
\begin{center}
\begin{tabular}{ |c|c|c|c|} 
 \hline
 Parameter & value & Parameter & value\\ 
 \hline
 \hline
 $f_c$ & $30$ GHz & Bandwidth & $100$ MHz \\
 \hline
 $N_0$ & $-94$ & Propagation & UMi\\ 
 \hline
 $N_t$ & $\{8,12,16\}$ & $N_R$ & $\{8,12,16\}$ \\
 \hline
 $K$ & $4$ & $\Gamma_k$ & $[0,20]$ dB \\ 
 \hline
 $P_{max}$ & $30$ dBm & $d_1^{CU}$ & $10$ m \\
 \hline
 $d_2^{CU}$ & $15$ m & $d_3^{CU}$ & $20$ m \\
 \hline
 $d_4^{CU}$ & $25$ m & $J$ & $2$ \\
 \hline
 $|\alpha_1^C|^2$ & $.001$ & $|\alpha_2^C|^2$ & $.00001$ \\
 \hline
 $\theta_T$ & $\{\pi/4,\pi/6\}$ & $\theta_1^C$ & $0$ \\
 \hline
 $\theta_2^C$ & $\frac{\pi}{2}$ & $\alpha_i^T$ & $10^{-1.5}$ \\
 \hline
\end{tabular}
\end{center}
\end{table}

\subsection{SCNR Results}

Fig. 2 shows the SCNR results obtained by the proposed algorithm for different numbers of transmit/receive antennas with respect to CU SINR thresholds. It can be seen that as the SCNR decreases, the CU SINR thresholds increase. This behavior can be attributed to the fact that a higher CU SINR threshold will require a higher transmit beamforming gain toward the CUs, thus resulting in lower levels of transmit powers being directed toward possible target directions. This results in lower reflected power from the target directions, which ultimately causes a reduction in the received SCNR. However, it can be noted that SCNR degradation is less severe for a larger number of antennas, especially for higher SINR thresholds. This is due to the fact that a larger number of antennas can provide more degrees of freedom in designing the beampatterns, which can support higher directivity gains in multiple directions. It is worth mentioning here that while we have shown results in terms of SINR for communication system, the corresponding data rates can be easily obtained by using the well-known Shannon equation for capacity.

\begin{figure}
  \centering
  \includegraphics[width=\columnwidth]{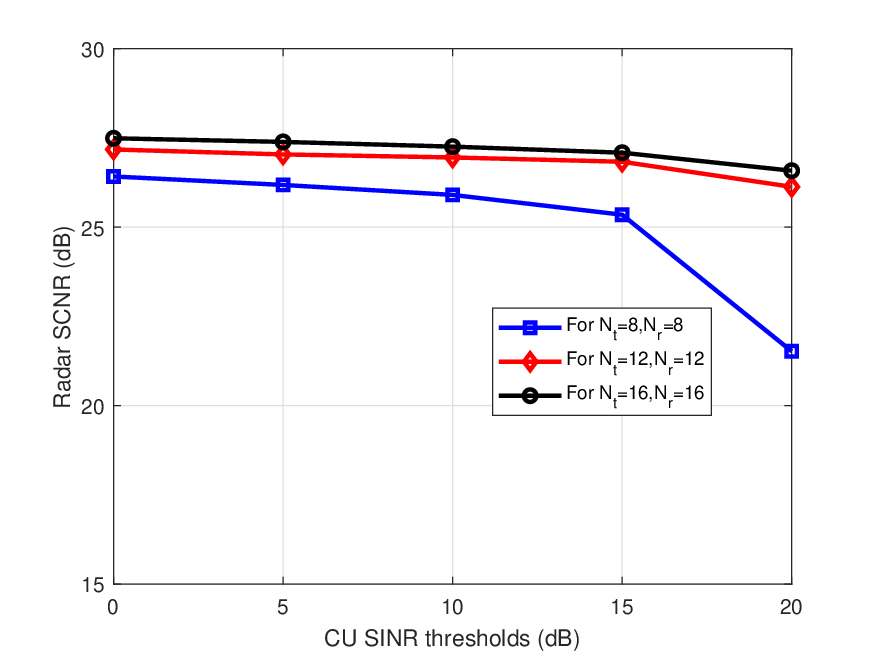}
  \caption{Achieved SCNR as a function of CU SINR thresholds for different numbers of transmit/receive antennas.}
\end{figure}

\subsection{Convergence Results}

Fig. 3 shows the convergence results of the proposed algorithm for different values of the CU SINR thresholds. It can be seen that the performance almost converges only after $3$ iterations for each realization of the CU channels. Therefore, the proposed algorithm does not involve high computational complexity due to the need for a greater number of iterations to achieve convergence. This result validates the discussion carried out in Section IV, where the theoretical guarantee for the convergence is proved by using the argument that the nondecreasing value of objective function is achieved at the end of each iteration.

\begin{figure}
  \centering
  \includegraphics[width=\columnwidth]{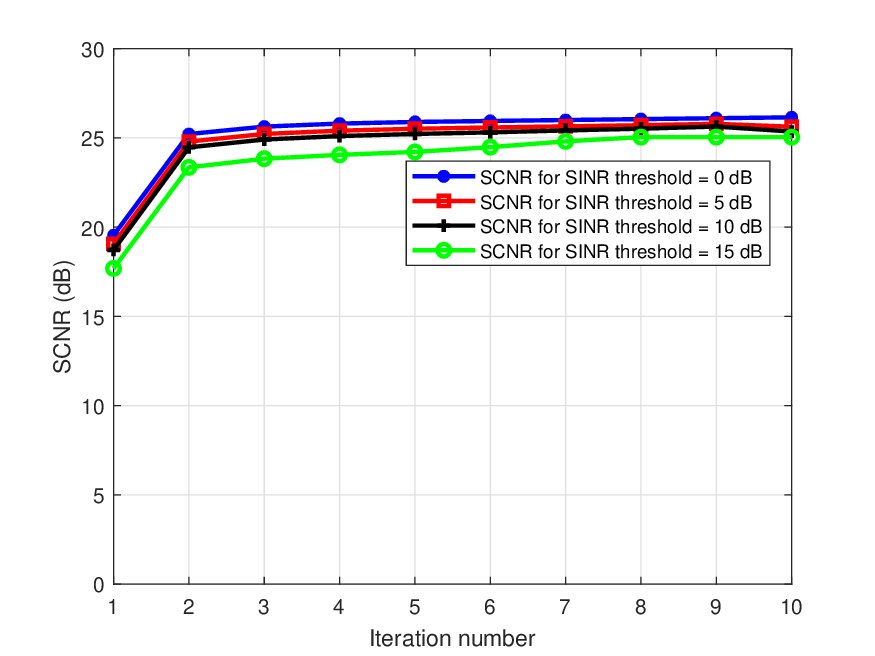}
  \caption{Convergence of the proposed algorithm for different values of CU SINR thresholds.}
\end{figure}

\subsection{SCNR for Different $I$ and the Angular Spread of Target Direction}

An important parameter in the considered optimization problem is the number of possible directions for the target and the angular spread of the arrival direction. In Fig. 4, we show the SCNR achieved for different values of $I$. Note that the achieved SCNR values are plotted on a linear scale. Therefore, we observe that a change in the value of $I$ has an almost negligible effect on the SCNR achieved. Another important aspect is the angular spread/gap over which the target can arrive. To analyze its effect, we consider two possible angular gaps denoted by $G1$ and $G2$ where $G1$ is $15$-$50$ degrees angular gap while $G2$ is $15$-$40$ degrees angular gap. We can see from Fig. 4 that for $G1$, there is only a slight decrease in the SCNR values. This almost identical SCNR performance can be attributed to the almost identical receive beamformer designs, shown in Fig. 5, for both angular spreads. From Fig. 5, it can be observed that while there are some dissimilarities in the beampattern over undesired angles, the beampatterns are almost identical over the angular spread of target directions and the clutter angles. The results presented in Fig. 4 and Fig. 5 suggest that it is sufficient to use only the angular spread information for designing the receive beamformer by using $I=2$ and choosing $\theta_1^T,\theta_2^T$ to be the two extreme angles within the angular spread.   

\begin{figure}
  \centering
  \includegraphics[width=\columnwidth]{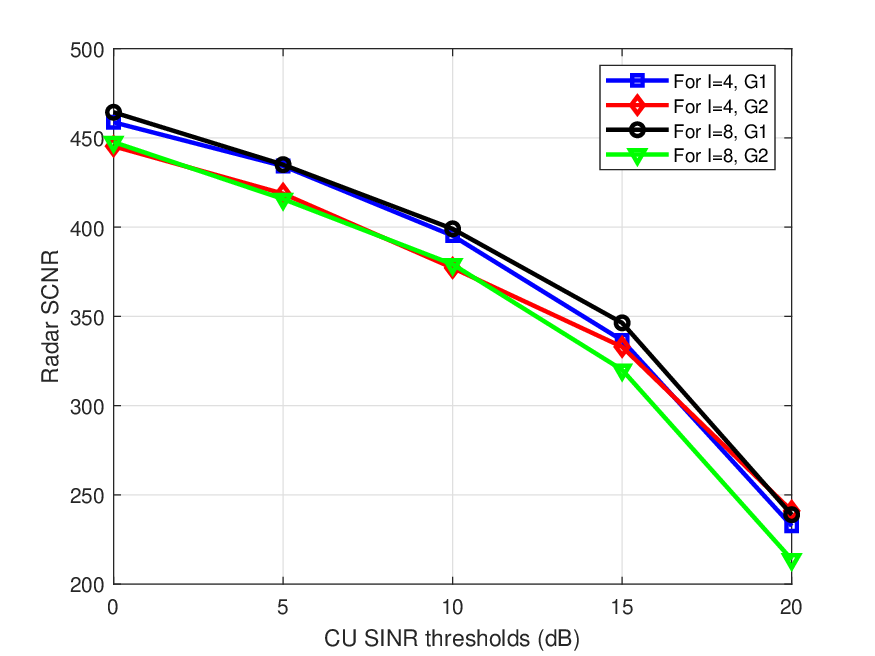}
  \caption{Achieved SCNR for different $I$ and the angular spread of target directions.}
\end{figure}

\begin{figure}
  \centering
  \includegraphics[width=\columnwidth]{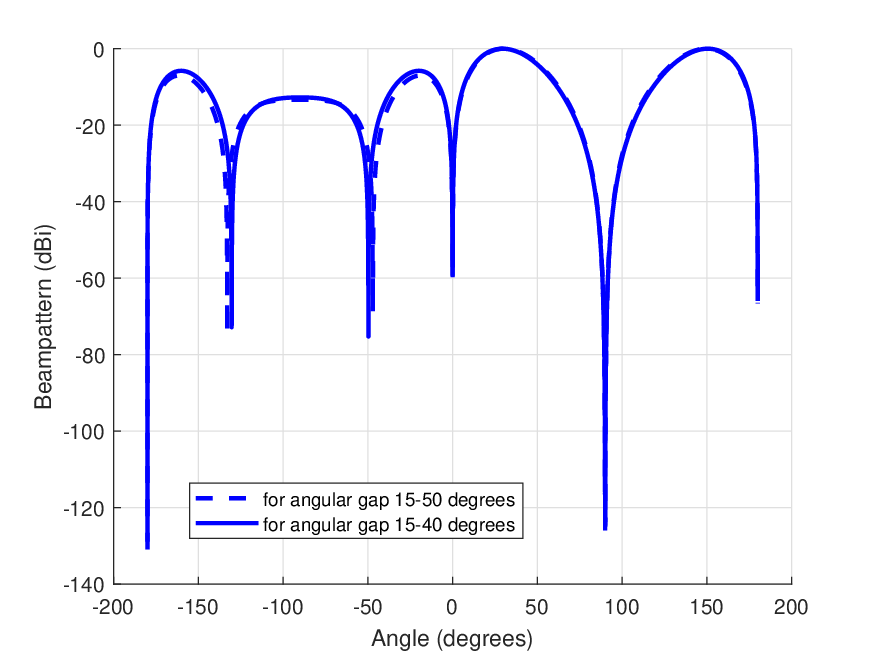}
  \caption{Receiver's beampattern for the different angular spread of target direction.}
\end{figure}
\subsection{Comparison Between Beampattern for 
Proposed Scheme and the Dedicated Beampatterns}

\begin{figure}
  \centering
  \includegraphics[width=\columnwidth]{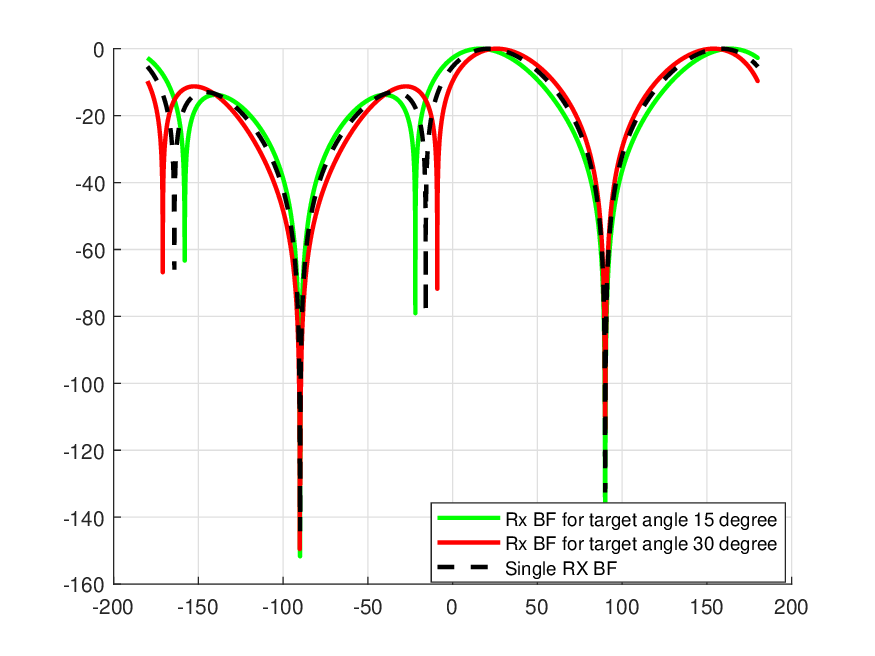}
  \caption{Receiver's beampattern for the proposed scheme and the dedicated beampatterns for different target directions.}
\end{figure}

Next, we compare the beam pattern of the proposed scheme with the dedicated beamformer scheme, which uses $I$ receive beamformers to combine at the receiver. For this comparison, we assume $\theta_i^T=15$ degree and $\theta_2^T=30$ degree. The clutter angles are assumed to be $\theta_1^C=90$ degree and $\theta_2^C=-90$ degree. The received beamformers obtained for both schemes are shown in Fig. 6. It can be seen that the dedicated receive beamformers provide the highest gains at the respective desired angles. On the other hand, the beampattern for the proposed scheme does not have the highest gain at either of the possible target directions, but it provides a higher gain over a range of possible target angles, which accounts for the "maximizing-the-minimum" aspect within the optimization problem. Although the beamforming gain achieved for the proposed scheme is inferior to the dedicated beamforming scheme at the exact possible target angles, the difference is almost negligible. Therefore, it is sufficient to use a single receive beamformer for combining, even when there is ambiguity in the arrival direction of the target.

\section{Conclusions}

An iterative optimization algorithm is proposed to solve the SCNR maximization problem with uncertainty in the target arrival direction. The optimization problem formulated with respect to transmitting beamformers is converted into a convex problem by first applying the penalty-based approach and then by using the MM approach. After fixing the transmit beamformers, it is observed that the resulting problem with respect to the receive beamformer is a special case of the generalized fractional program with Toeplitz quadratics, for which a global optimal solution can be obtained through the Dinkelback algorithm. It is also shown that the resulting objective value achieved after each iteration is nondecreasing, and hence the convergence of the proposed algorithm is also guaranteed. 

The different aspects of the performance of the proposed algorithm are then analyzed using numerical results. Regarding the proposed algorithm, two important conclusions can be drawn: (i) It can be concluded that it is not necessary to update the receive beamformer's weights if the number of possible target directions has increased with a fixed angular spread in the target arrival direction, and (ii) the receive beampattern remains almost unchanged with a slight change in the angular spread of the target direction.


\bibliographystyle{IEEEtran}
\bibliography{refs}
\end{document}